\newcommand{\ket}[1]{\left| #1 \right>}
\newcommand{\bra}[1]{\left< #1 \right|}
\newcommand{\norm}[1]{\left|\left| #1 \right|\right|}
\newcommand{\Outer}[2]{ \left|#1\right>\left<#2\right| }
\newcommand{\blue}{}
\newcommand{\tr}{\text{tr}}
\newcommand{\lind}{\mathcal{L}
}
\newtheorem*{problem}{Switching control problem}
\newtheorem{proposition}{Proposition}
\newtheorem{theorem}{Theorem}
\newtheorem{definition}{Definition}
\newtheorem{assumption}{Assumption}
\theoremstyle{definition}
\title{\LARGE \bf Stabilization via feedback switching\\ for quantum stochastic dynamics}
\author{Tommaso Grigoletto and Francesco Ticozzi
\thanks{T. Grigoletto and F. Ticozzi are with the Department of Information Engineering, University of Padova, Via Gardenigo 6, 35131 Padova, Italy. Emails: 
\href{mailto:tommaso.grigoletto@phd.unipd.it}{\texttt{tommaso.grigoletto@phd.unipd.it}},
\href{mailto:ticozzi@dei.unipd.it}{\texttt{ticozzi@dei.unipd.it}}.}}
\newcommand{\myscale}{0.55}
\begin{document}

\maketitle
\thispagestyle{empty}
\pagestyle{empty}

\begin{abstract}
We propose a new method for pure-state and subspace preparation in quantum systems, which employs the output of a continuous measurement process and switching dissipative control to improve convergence speed, as well as robustness with respect to the initial conditions. In particular, we prove that the proposed closed-loop strategy makes the desired target globally asymptotically stable both in mean and almost surely, and we show it compares favorably against a time-based and a state-based switching control law, with significant improvements in the case of faulty initialization.

\end{abstract}

\section{INTRODUCTION}

In the rapidly growing field of quantum science and technologies, the quest for new reliable and effective techniques to manipulate systems at the quantum scale is of paramount importance, and control engineers are actively contributing to this effort (see e.g. \cite{altafini-tutorial} and references therein). Due to the intrinsic nature of quantum systems, one of the central resources of classical control design - {\em feedback} control - is particularly difficult to harness. In the last two decades, great progress has been made in this sense, building on the foundation of quantum probability and filtering theory \cite{belavkin,quantum-filtering,ticozzi2012stabilization}, and arriving to remarkable experimental implementations \cite{haroche-feedback}. 

In this paper, we combine the advantages offered by measurement-based feedback with switching control strategies, which allow us to include {\em dissipative} control resources in a systematic way, as opposed to the more typical Hamiltonian control (see e.g. \cite{VanHandel2005, Mirrahimi2007,} and \cite{altafini-tutorial} for a review). {\blue The main contribution is a switching control strategy for the stabilization of a target pure state or subspace, where the current dynamics is selected based on the estimation of the state at the switching times, and maintained for a given dwell time. Such strategy guarantees practical stability of the target in mean under minimal assumptions (existence of a switching control Lyapunov function), while under typical control assumptions the stability is guaranteed both in mean and almost surely.}

The paper is structured as follows: Section \ref{sec:prob_def} defines the problem of interest and present some theoretical results that are used in the rest of the paper; {\blue Section \ref{sec:control_strategies} discusses the assumptions under which effective strategies can be derived, recalls open-loop switching strategies previously introduced in \cite{Scaramuzza-Ticozzi}, presents the novel measurement-based, closed-loop control laws and proves its stability. The analysis of convergence uses techniques that depart from those based on linear systems used in \cite{Scaramuzza-Ticozzi} and build on specific properties of the stochastic filtering dynamics \cite{Ticozzi2017}.} Section \ref{sec:case_study} provides some insights on the performances of the novel control law by presenting a relevant case study and simulation results for the stabilization of entangled states on networks of two-level systems.

\section{PROBLEM DEFINITION}
\label{sec:prob_def}
In this article we consider a system described by a finite-dimensional Hilbert space $\mathcal{H}$. Following the standard in physics and quantum information science, we shall employ Dirac's notation for vectors $\ket{\psi}\in\cal{H},$ and their duals $\bra{\psi}\in\cal{ H}^\dag.$ Let $\mathcal{B}(\mathcal{H})$ denote the set of linear operators on $\mathcal{H}$.
The state of a quantum system is completely described by a density operator $\rho\in\mathfrak{D}(\mathcal{H}) = \{\rho\in \mathcal{B}(\mathcal{H}):\rho=\rho^\dag\geq0, \ \tr(\rho)=1\}$. 

We will suppose that the system is controlled with a series of driving dynamics and that it is subjected to an homodyne detection measurement. The resulting dynamics is thus described by processes $(\rho_t)_{t\in\mathbb{R}_+}$ of states associated to the stochastic master equation (SME) \cite{quantum-filtering,altafini-tutorial}
\begin{equation}
    d\rho_t=\mathcal{L}_j(\rho_t)dt+\mathcal{G}_C(\rho_t)dW_t,
    \label{eqn:model}
\end{equation}
where: 
\begin{align*}
    \mathcal{L}_j(\rho_t) &= -i[H_j,\rho_t] + \mathcal{D}_{L_j}(\rho_t) +\mathcal{D}_C(\rho_t),\\
    \mathcal{D}_A(\rho_t) &= A\rho_t A^\dag-\frac{1}{2}\{A^\dag A, \rho_t\},\\
    \mathcal{G}_C(\rho_t) &= C\rho_t +\rho_t C^\dag -\tr\left((C+C^\dag)\rho_t\right)\rho_t
\end{align*}
In equation \eqref{eqn:model}, the term $-i[H_j,\rho_t] + \mathcal{D}_{L_j}(\rho_t)$ represents the driving dynamics associated to the Hamiltonian and noise operators $(H_j, L_j)$. We suppose that a list of possible driving dynamics is provided and the role of the controller will be to choose which dynamics to activate. This concept will be further formalized in the following. 
The term $\mathcal{D}_C(\rho_t)dt+\mathcal{G}_C(\rho_t)dW_t$ accounts for the homodyne detection measurement process associated to the fixed noise operator $C$. For simplicity we here consider unit detection efficiency, but the control strategy works with imperfect detection as well. The process $dW_t$ is a Wiener process, adapted to the filtration ${\cal F}_t$ \cite{quantum-filtering}, and it can be seen as the innovation for the 
homodyne measurement output $ dY_t =\tr\left((C^\dag+C)\rho_t\right)dt+dW_t.$

If the measurement record is not accessible, the best description of the state evolution can be obtained as the expectation of \eqref{eqn:model} over the outcomes of the measurement process. Namely, defining $\hat{\rho}_t = \mathbb{E}[\rho_t|{\cal F}_t]$, we have that the time evolution of $\hat{\rho}_t$ is described by the Markovian master equation (MME) \cite{alicki-lendi}:
\begin{equation}
        \frac{d}{dt}\hat{\rho}_t = \mathcal{L}_j(\hat{\rho}_t).
        \label{eqn:MME_model}
\end{equation}

Throughout this work, we will consider the stabilization of linear subspaces of $\mathcal{H}$, and the relevant particular case of pure states. 
Let $\mathcal{H_S}$ be the target subspace of $\mathcal{H}$. Denoting $P_S$ the orthogonal projector on $\mathcal{H_S}$, we can describe the set of states whose support is $\mathcal{H_S}$ or a subspace of $\mathcal{H_S}$ as 
\begin{equation}
    \mathcal{I_S(H)} = \{\rho\in\mathfrak{D}(\mathcal{H}): \tr(P_S\rho)=1\}.
\end{equation}
With a slight abuse of terminology, we say that $\mathcal{H_S}$ is invariant, stable, or attractive for the dynamics if such is the supported state-set $\mathcal{I_S(H)}.$ 
    The subspace $\mathcal{H_S}$ is said globally asymptotically stable (GAS) for \eqref{eqn:model}:\vspace{1mm}\\ - in mean if  $\lim\limits_{t\to\infty}\norm{\hat{\rho}_t - P_S \hat{\rho}_t P_S }_1=0$, $\forall\rho_0\in\mathfrak{D}(\mathcal{H})$;\\
    - almost surely ({\em  a.s.}) if $\mathbb{P}\left(\lim\limits_{t\to\infty}\norm{\rho_t - P_S \rho_t P_S }_1=0\right)=1$, $\forall\rho_0\in\mathfrak{D}(\mathcal{H})$.

    We next summarize some results of \cite{Ticozzi2017} that will be of use to our aims in the following theorem. {\blue A Lyapunov function is a functional $V:\mathfrak{D}(\mathcal{H})\to\mathbb{R}_+$ such that: $V(\rho)\geq0$, with $V(\rho)=0$ if and only if $\rho\in\mathcal{I_S(H)}$ and $\frac{d}{dt}V(\rho)<0$ for all $\rho\notin\mathcal{I_S(H)}$.

    \begin{theorem}
        Consider system \eqref{eqn:model}, with a fixed $\lind_j$. A subspace $\mathcal{H_S}$ of $\mathcal{H}$ is:
        \begin{itemize}
            \item GAS in mean if and only if it is GAS almost surely;
            \item if GAS in mean if and only if there exists an operator $K\geq0$ such that $V(\rho)=\tr(K\rho)$ is a Lyapunov function. 
        \end{itemize}
        \label{thm:mean_as_equ}
    \end{theorem}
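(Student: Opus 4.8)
The plan is to establish the two bullets in reverse order, treating the Lyapunov characterization as the analytic engine and then reading off the mean/almost-sure equivalence from it. I will use throughout that the mean state $\hat\rho_t=\mathbb{E}[\rho_t]$ solves the deterministic linear equation \eqref{eqn:MME_model}, that $\mathfrak{D}(\mathcal{H})$ is compact and convex, and that for $K\geq 0$ one has $\tr(K\rho)=0$ exactly when $\mathrm{ran}(\rho)\subseteq\ker K$; thus $V(\rho)=\tr(K\rho)$ vanishes precisely on $\mathcal{I_S(H)}$ as soon as $\ker K=\mathcal{H_S}$.

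For the Lyapunov characterization, the sufficiency direction is the standard deterministic argument: if $V(\rho)=\tr(K\rho)$ is a Lyapunov function, then $\frac{d}{dt}V(\hat\rho_t)=\tr(K\mathcal{L}_j(\hat\rho_t))<0$ off $\mathcal{I_S(H)}$, and a LaSalle/compactness argument on $\mathfrak{D}(\mathcal{H})$ forces $V(\hat\rho_t)\to 0$; since $V$ and $\rho\mapsto\|\rho-P_S\rho P_S\|_1$ are continuous with common zero set $\mathcal{I_S(H)}$, convergence of one entails convergence of the other, giving GAS in mean. For necessity I would construct $K$ directly from the adjoint flow. Writing $\mathcal{L}_j^*$ for the completely positive, unital adjoint generator, set
\[ K=\int_0^\infty e^{t\mathcal{L}_j^*}(I-P_S)\,dt. \]
Positivity and unitality of $e^{t\mathcal{L}_j^*}$ give $K\geq 0$, and the nonnegative integrand pins $\ker K=\mathcal{H_S}$: invariance of $\mathcal{I_S(H)}$ kills the integrand on $\mathcal{H_S}$, while its value at $t=0$ is $\bra{\psi}(I-P_S)\ket{\psi}$, which is nonzero off $\mathcal{H_S}$. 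GAS in mean forces $e^{t\mathcal{L}_j^*}(I-P_S)\to 0$, exponentially since the dimension is finite, so the integral converges and $\mathcal{L}_j^*(K)=-(I-P_S)$; hence $\frac{d}{dt}V(\rho)=-\tr((I-P_S)\rho)<0$ off $\mathcal{I_S(H)}$ and $V$ is a genuine Lyapunov function.

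With a linear Lyapunov function available, the mean/almost-sure equivalence follows quickly. For the implication from almost sure to mean convergence, Jensen's inequality and linearity give $\|\hat\rho_t-P_S\hat\rho_t P_S\|_1=\|\mathbb{E}[\rho_t-P_S\rho_t P_S]\|_1\leq\mathbb{E}\|\rho_t-P_S\rho_t P_S\|_1$, and since the integrand is uniformly bounded, dominated convergence promotes a.s. convergence to $0$ into convergence in mean. For the converse, I apply the It\^o rule to $V(\rho_t)=\tr(K\rho_t)$: its drift is $\tr(K\mathcal{L}_j(\rho_t))\leq 0$ and its diffusion $\tr(K\mathcal{G}_C(\rho_t))\,dW_t$ has bounded integrand, so $V(\rho_t)$ is a nonnegative supermartingale and converges a.s. to some $V_\infty\geq 0$. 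Taking expectations gives $\mathbb{E}[V(\rho_t)]=V(\hat\rho_t)\to 0$ by GAS in mean, so Fatou yields $\mathbb{E}[V_\infty]=0$, whence $V_\infty=0$ and $\|\rho_t-P_S\rho_t P_S\|_1\to 0$ almost surely.

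I expect the necessity direction of the Lyapunov characterization to be the main obstacle: the downstream steps are convexity estimates and a supermartingale argument, but producing $K$ rests on knowing that GAS in mean of a finite-dimensional Lindblad semigroup entails exponential decay of $I-P_S$ along the adjoint flow, so that the defining integral converges. This is exactly the point where the invariant-subspace structure theory for \eqref{eqn:MME_model} recalled from \cite{Ticozzi2017} is needed, both to guarantee convergence of the integral and, more delicately, to rule out that attractivity fails on some hidden invariant piece of $\mathcal{H_S}^\perp$.
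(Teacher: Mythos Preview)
The paper does not prove this theorem; it is quoted from \cite{Ticozzi2017}, with only the subsequent Remark hinting at the construction used there. Your argument is correct and largely self-contained, and the supermartingale step you use for mean $\Rightarrow$ a.s.\ is precisely the mechanism the paper recycles in the proof of Theorem~\ref{thm:main}. The substantive difference is how $K$ is produced: the Remark indicates that in \cite{Ticozzi2017} one obtains $K_R$ as a (perturbed) Perron--Frobenius eigen-operator of the dual generator restricted to $\mathcal{H_S}^\perp$ and extends by zero on $\mathcal{H_S}$, whereas you solve the Lyapunov equation $\mathcal{L}_j^*(K)=-(I-P_S)$ directly via the integral $K=\int_0^\infty e^{t\mathcal{L}_j^*}(I-P_S)\,dt$. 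Your route is more elementary and yields the clean identity $\dot V(\rho)=-\tr((I-P_S)\rho)$; the Perron--Frobenius route instead carries explicit spectral information (the dominant eigenvalue of the reduced generator governs the decay of $V$), which is exactly what the paper exploits downstream in the bound \eqref{eq:bound}. One point you should tighten: your claim that $\ker K\supseteq\mathcal{H_S}$ uses invariance of $\mathcal{I_S(H)}$ under $e^{t\mathcal{L}_j}$, which you invoke without justification; this is indeed part of the invariant-subspace structure theory you correctly flag at the end as the residual input from \cite{Ticozzi2017}.
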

    {\em Remark:} These results are the starting point for our analysis and deserve some explanations: (i) The linear  $V$ is associated with a $K$ such that $K_R:=P_SKP_S>0,$ and the latter matrix is derived from a (perturbed) Perron-Frobenius eigen-operator for the dual dynamics, reduced to the complement of the target. (ii) The equivalence of asymptotic stability in mean and a.s. is {\em crucially dependent on the fact the target is a subspace}, and not true otherwise. In our main theorem we will use the same proof idea, applied to switching evolutions.}

We shall be interested in sequences of switching times that are unbounded countable set of times $0,t_1,t_2,\dots$ such that $t_k-t_{k-1}>\epsilon$ for some $\epsilon>0.$  The assumption $t_k-t_{k-1}>\epsilon$ for $\epsilon>0$ is introduced to prevent chattering: for this reason, we shall refer to a sequence with the properties above as a {\em non-chattering}. {\blue Non-chattering time sequences are essential for practical implementations, as well as ensuring well-behaved solution of the SME. This point will not be explicitly discussed in this work but the proof follows the discussion of \cite{Mirrahimi2007}.} Finally, we state the control problem of interest.

\begin{problem}
    Given a target subspace $\mathcal{H_S}$ of $\mathcal{H}$ and a finite set of generators $\{\mathcal{L}_j\}_{j=1,\dots,m}$ for model \eqref{eqn:model}, find a piece-wise constant switching control law $j(t):[0,+\infty)\to\{1,\dots,m\}$ that admits a set of non-chattering switching times, so that $\mathcal{H_S}$ is made GAS (in mean and/or almost surely) by selecting  $\mathcal{L}_{j(t_k)}$ on $[t_k,t_{k+1})$. 
\end{problem}

\section{CONTROL STRATEGIES}
\label{sec:control_strategies}
In this section, we will first recall two previously proposed switching techniques, based on {\em open-loop} switching, and next present our {\em closed-loop} proposal.

\subsection{Control Assumptions}
The following assumption is typically required to prove convergence of a switching law. 
    \begin{assumption}
        \label{assumption}
            {Each Lindblad generator has the target subspace $\mathcal{H_S}$ as invariant , and}
            there exists $\alpha\in[0,1]^m$, $\norm{\alpha}_1=1$ such that $\mathcal{H_S}$ is GAS in mean for
            \begin{equation}
               \label{eq:assump1} \frac{d}{dt}\hat{\rho}_t = \mathcal{L}_{c}(\hat{\rho}_t) = \sum_{j=1}^m\alpha_j \mathcal{L}_{j}(\hat{\rho}_t).
            \end{equation} 
    \end{assumption}
     We also introduce a second working assumption, which relaxes Assumption 1 above. Essentially, it requires a linear control Lyapunov function. 
    \begin{assumption}
    \label{assumption_2}
        There exists a linear Lyapunov function $V(\rho)$ such that  $\forall \rho\notin\mathcal{I_S(H)}$, $\exists j$: $V(\mathcal{L}_j(\rho))<0$ and $\forall \rho\in\mathcal{I_S(H)}$, $\exists j$: $V(\mathcal{L}_j(\rho))=0$.
    \end{assumption}
    One of the key differences between the assumptions is that the second does not require invariance of the target for each generator, and is thus weaker, as we argue in the following.
    \begin{proposition}\label{prop:1implies2}
        Assumption \ref{assumption} implies Assumption \ref{assumption_2}.
    \end{proposition}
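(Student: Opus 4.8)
The plan is to exhibit a single linear Lyapunov function certifying Assumption \ref{assumption_2}, and the natural candidate is the one attached to the convex-combination dynamics appearing in Assumption \ref{assumption}. First I would invoke the second bullet of Theorem \ref{thm:mean_as_equ}: since Assumption \ref{assumption} guarantees that $\mathcal{H_S}$ is GAS in mean for $\mathcal{L}_c=\sum_j\alpha_j\mathcal{L}_j$, there exists $K\geq0$ such that $V(\rho)=\tr(K\rho)$ is a Lyapunov function for the flow generated by $\mathcal{L}_c$. In particular $V(\rho)\geq0$ with $V(\rho)=0$ if and only if $\rho\in\mathcal{I_S(H)}$, and $\tr(K\mathcal{L}_c(\rho))<0$ for every $\rho\notin\mathcal{I_S(H)}$.

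The crucial observation is that both $V$ and the generators act linearly, so that, reading $V(\mathcal{L}_j(\rho))$ as the derivative $\tr(K\mathcal{L}_j(\rho))$ of $V$ along the $j$-th flow, one has
\begin{equation*}
  V(\mathcal{L}_c(\rho)) = \tr\Big(K\sum_{j}\alpha_j\mathcal{L}_j(\rho)\Big) = \sum_{j}\alpha_j\,V(\mathcal{L}_j(\rho)),
\end{equation*}
i.e. the decay rate of $V$ along the averaged dynamics is a convex combination (recall $\alpha_j\geq0$, $\|\alpha\|_1=1$) of the decay rates along the individual generators.

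With this identity the first requirement of Assumption \ref{assumption_2} follows by a pigeonhole-type argument: for $\rho\notin\mathcal{I_S(H)}$ the left-hand side is strictly negative, and a convex combination of real numbers can be negative only if at least one of the positively weighted terms is negative; hence there exists $j$ (necessarily with $\alpha_j>0$) such that $V(\mathcal{L}_j(\rho))<0$. For the second requirement I would use the invariance part of Assumption \ref{assumption}: if $\rho\in\mathcal{I_S(H)}$, then since $\mathcal{H_S}$ is invariant for each $\mathcal{L}_j$ the trajectory $\hat{\rho}_t$ started at $\rho$ stays in $\mathcal{I_S(H)}$, on which $V\equiv0$; differentiating at $t=0$ yields $V(\mathcal{L}_j(\rho))=0$, and this in fact holds for every $j$, so a fortiori for some $j$.

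I do not expect a genuine obstacle here: the argument reduces to the linearity of $V$ together with the convexity of the weights $\alpha$. The only points requiring care are the reading of the symbol $V(\mathcal{L}_j(\rho))$ as the derivative $\tr(K\mathcal{L}_j(\rho))$, and making explicit that the invariance hypothesis in Assumption \ref{assumption} — precisely the feature distinguishing it from Assumption \ref{assumption_2} — is what delivers the equality $V(\mathcal{L}_j(\rho))=0$ on the target set, since without it one could not guarantee the second condition with the same $V$.
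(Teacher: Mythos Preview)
Your proposal is correct and follows essentially the same approach as the paper: invoke Theorem \ref{thm:mean_as_equ} to obtain a linear Lyapunov function for $\mathcal{L}_c$, use linearity to write $V(\mathcal{L}_c(\rho))=\sum_j\alpha_jV(\mathcal{L}_j(\rho))$, and conclude from the convex-combination structure that some $V(\mathcal{L}_j(\rho))<0$ off the target, with the invariance hypothesis handling the on-target equality. The only cosmetic difference is that the paper phrases the existence step via $j^*=\arg\min_jV(\mathcal{L}_j(\rho))\leq V(\mathcal{L}_c(\rho))<0$ rather than your pigeonhole formulation; the content is identical.
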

    \begin{proof}
        If Assumption 1 holds, from Theorem \ref{thm:mean_as_equ} we have that there exists a linear Lyapunov function $V(\rho)$ such that $V(\rho)>0$ for all $\rho\notin\mathcal{I_S}(\mathcal{H})$, $V(\rho)=0$ for all $\rho\in\mathcal{I_S}(\mathcal{H})$ and $V(\mathcal{L}_c(\rho))<0$ for all $\rho\notin\mathcal{I_S}(\mathcal{H})$. This means that we have 
       $ V(\mathcal{L}_c(\rho))=\sum_{j=1}^m\alpha_j V(\mathcal{L}_{j}(\rho))<0 
      $
        for all $\rho\notin\mathcal{I_S}(\mathcal{H})$, thanks to the linearity of $V(\rho)$. Then, since $\alpha_i>0$ and $\norm{\alpha}_1=1$, we have that there exist an index $
            j^* = \arg\min_{j=1,\dots,m} V(\mathcal{L}_{j}(\rho))
      $
        such that
       $ V(\mathcal{L}_{j^*}(\rho)) \leq \sum_{j=1}^m\alpha_j V(\mathcal{L}_{j}(\rho)) = V(\mathcal{L}_c(\rho))< 0
       $
        for any $\rho\notin\mathcal{I_S}(\mathcal{H})$.
        {Finally, since from Assumption \ref{assumption} we have that every Lindblad generator leaves $\mathcal{H_S}$ invariant, we have that $\forall j$ $V(\mathcal{L}_j(\rho))=0$,  $\forall\rho\in\mathcal{I_S(H)}$.}
    \end{proof}
   Proposition \ref{prop:1implies2} will be instrumental to the proof of the main theorem.  It is possible to prove that the converse implication is not true. For example, consider the following three-level system in ${\cal H}=\textrm{span}\{\ket{0},\ket{1},\ket{2}\}$. Let $\ket{0}\bra{0}$ be the target state, and consider the two following generators:
    \begin{align*}
        \mathcal{L}_1(\rho):( &H_1 = 0, L_1=\Outer{0}{2}+\Outer{1}{1}+\Outer{2}{2}),\\
        \mathcal{L}_2(\rho):( &H_2=\Outer{0}{2}+\Outer{2}{0},\\
		&\quad L_2=\Outer{0}{1} +\Outer{1}{1}+\Outer{2}{2} ).
    \end{align*}
    Notice that the generator $\mathcal{L}_1(\rho)$ stabilizes the space generated by $\textrm{span}\{\ket{0},\ket{1}\},$ being the only invariant space for $L_1$  and has trivial dynamics on it \cite{ticozzi-QDS}, but does not make the target GAS. For the generator $\mathcal{L}_2(\rho)$, instead, $\ket{0}$ is not even invariant.
    It is then quite clear that there exists no convex combination of $\mathcal{L}_1(\rho)$ and $\mathcal{L}_2(\rho)$ that makes $\ket{0}\bra{0}$ GAS in mean.
    However, by considering $V(\rho) = \tr(K\rho)$ with $K$ the projector on the subspace orthogonal to the target,
	it is possible to prove by direct computation that $\forall\rho\neq\ket{0}\bra{0}$, $V(\mathcal{L}_j(\rho))<0$ for some $j$ and $V(\mathcal{L}_j(\ket{0}\bra{0}))=0$ for $j=1,2$. 
	
    This shows that Assumption \ref{assumption_2} is genuinely more general than Assumption \ref{assumption}. However, if one requires non-chattering control strategies, we will argue it only allows for a weaker stability notion, and convergence in mean.
    
    \subsection{Open-loop strategies: cyclic and state-based switching}
    
    The first control strategy we consider is a {\em time-based} solution \cite{Scaramuzza-Ticozzi}, based on Assumption \ref{assumption}.
    \begin{definition}[Cyclic switching control law]
        Given the vector $\alpha$ that satisfies Assumption \ref{assumption} for the set of Lindblad dynamics $\{\mathcal{L}_j\}$, the cyclic switching control law selects each index $j$ for a fraction $\alpha_j$ of the total cycle period $\varepsilon>0$.
    \end{definition}
    
    Note that this control law depends only on the vector $\alpha$ from Assumption \ref{assumption}, and no information on the initial state of the system is required. Essentially, for $\varepsilon\rightarrow 0$ it mimics the evolution generated by the convex combination \eqref{eq:assump1}, and makes $\mathcal{H_S}$ GAS in mean. A full proof can be found in \cite{Scaramuzza-Ticozzi}.
    The second control law we recall is  also proposed in \cite{Scaramuzza-Ticozzi}, and here specialized to the case of a linear Lyapunov function.
    
    \begin{definition}[State-based switching control law]
        Given a set of Lindblad operators $\{\mathcal{L}_j\}$, an estimate of the initial state $\rho_0$, a linear Lyapunov function $V(\cdot)$ that guarantees that $\mathcal{H_S}$ is GAS for the system $\mathcal{L}_c$ as in \eqref{eq:assump1} and a {non-chattering} switching time sequence $\{t_k\}$, the state-based switching control law is defined as:
        \begin{equation}
           j(t) = \arg\min_{j=1,\dots,m} V(\mathcal{L}_j(\hat{\rho}_{t_k})) , \quad \forall t\in[t_k,t_{k+1})
        \end{equation}
        where $\hat{\rho}_{t_k}$ is the average of the system state at time $t_k$, solution of the MME.
    \end{definition} 
    The original proposal in \cite{Scaramuzza-Ticozzi} employed a quadratic function to show the existence of a non-chattering sequence; in this case, it follows from Theorem \ref{thm:main} in the following section.
    
    \subsection{Measurement-based switching control}
    
    The approach we propose exploits continuous measurements to have a {\em closed-loop} estimate of the current state of the system, and then apply the state-based switching control law based on this estimate - as opposed to the MME average state.

    An important step in setting up the feedback loop is the choice of the measurement operator $C$, so that the measurement process does not destabilize the target  - namely, the target should be an eigenstate or, more generally, an eigenspace, of $C$.
    A natural choice is to resort again to Theorem \ref{thm:mean_as_equ} and choose $C=K$, where $K$ is the matrix of the Lyapunov function $V(\rho)=\tr(K\rho)$ that satisfies Assumption \ref{assumption_2} for the set of MME generators $\{\mathcal{L}_j\}$. In this way we have that $\mathcal{H_S}\subseteq \textrm{ker}(C)$ is naturally satisfied by the construction of $K$, and also it holds that $V(\mathcal{D}_K(\rho))=0$ for all $\rho$. From now on we will only consider this choice of $C$ for sake of simplicity.
    The switching control law we consider is:
    \begin{definition}[Measurement-based switching control law]
    \label{def:measurement_based}
        Given a set of Lindblad operators $\{\mathcal{L}_j\}$, an estimate of the initial state $\rho_0$, a $V(\rho) = \tr(K\rho)$ that satisfies Assumption \ref{assumption_2} for the set $\{\mathcal{L}_j\}$ 
        and a {non-chattering} switching time sequence $\{t_k\}$, the measurement-based switching control law is defined as:
        \begin{equation}
            j(t) = \arg\min_{j=1,\dots,m} V(\mathcal{L}_j(\rho_{t_k})), \quad \forall t\in[t_k,t_{k+1})
        \end{equation}
        where $\rho_{t_k}$ is the estimate of the state at  $t_k$ obtained as the solution of equation \eqref{eqn:model}.
    \end{definition}

    We will now prove stability of the proposed switching law. 
    
    \begin{theorem}\label{thm:main}
    {\blue If Assumption \ref{assumption} holds, then there exists a non-chattering time sequence such that} the measurement-based switching control law makes $\mathcal{I_S}(\mathcal{H})$ GAS in mean and almost surely.
    \end{theorem}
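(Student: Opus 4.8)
The plan is to work throughout with the linear Lyapunov function $V(\rho)=\tr(K\rho)$ supplied by Assumption \ref{assumption} via Proposition \ref{prop:1implies2}, and to exploit the prescribed choice $C=K$. This choice gives $\mathcal{H_S}\subseteq\ker C$, $V(\mathcal{D}_K(\rho))=0$, and a diffusion coefficient for $V$ equal to $\tr(K\,\mathcal{G}_K(\rho))=2\big(\tr(K^2\rho)-(\tr(K\rho))^2\big)$, which is non-negative, bounded by $2\|K\|_\infty V(\rho)$, and hence \emph{vanishes on} $\mathcal{I_S}(\mathcal{H})$. The whole argument is aimed at proving that $V(\rho_t)\to 0$ almost surely; since $0\le V(\rho_t)\le\|K\|_\infty$ and $V(\rho)=0$ exactly on $\mathcal{I_S}(\mathcal{H})$, the mean statement then follows by bounded convergence, because $\mathbb{E}[V(\rho_t)]=V(\hat\rho_t)$. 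The one elementary input I keep using is Proposition \ref{prop:1implies2}: at each switching instant the selected index obeys $V(\mathcal{L}_{j(t_k)}(\rho_{t_k}))=\min_j V(\mathcal{L}_j(\rho_{t_k}))\le V(\mathcal{L}_c(\rho_{t_k}))<0$ whenever $\rho_{t_k}\notin\mathcal{I_S}(\mathcal{H})$.

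First I would eliminate the noise from the bookkeeping at the switching times. Fixing a constant dwell time $\tau$ (so $t_k=k\tau$, which is non-chattering with $\epsilon=\tau/2$), linearity of $V$ and the martingale property of the stochastic integral give
\begin{equation*}
\mathbb{E}\big[\,V(\rho_{t_{k+1}})\,\big|\,\mathcal{F}_{t_k}\,\big]=V\big(e^{\tau\mathcal{L}_{j(t_k)}}\rho_{t_k}\big),
\end{equation*}
since the conditional mean of $\rho_s$ on $[t_k,t_{k+1})$ is exactly the MME flow \eqref{eqn:MME_model} generated by the held $\mathcal{L}_{j(t_k)}$. Hence the one-step conditional expected decrease is the deterministic functional $D(\rho_{t_k}):=V(\rho_{t_k})-V\big(e^{\tau\mathcal{L}_{j(t_k)}}\rho_{t_k}\big)=-\int_0^\tau V\big(\mathcal{L}_{j(t_k)}(e^{s\mathcal{L}_{j(t_k)}}\rho_{t_k})\big)\,ds$, exactly the quantity one controls for the open-loop switched MME. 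Its integrand is strictly negative at $s=0$ off the target, so on any compact set $\{V\ge\delta\}$ continuity and compactness yield a single $\tau$ and a uniform bound $D(\rho)\ge c_\delta>0$.

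With the dwell time so chosen, $\big(V(\rho_{t_k})\big)_k$ is a non-negative supermartingale: it converges almost surely, and the telescoping bound $\sum_k\mathbb{E}[D(\rho_{t_k})]\le V(\rho_0)<\infty$ forces $D(\rho_{t_k})\to 0$ a.s.; together with the uniform decrease on $\{V\ge\delta\}$ this gives $\liminf_k V(\rho_{t_k})=0$ a.s., i.e. the trajectory returns to every neighborhood of $\mathcal{I_S}(\mathcal{H})$. I would then close the proof by an absorption argument in the spirit of the a.s.\ part of Theorem \ref{thm:mean_as_equ}: under Assumption \ref{assumption} each $\mathcal{L}_j$ leaves $\mathcal{H_S}$ invariant and $\mathcal{H_S}\subseteq\ker C$, so $\mathcal{I_S}(\mathcal{H})$ is invariant for \eqref{eqn:model} under \emph{any} switching law and the diffusion of $V$ degenerates there; a recurrent trajectory near such an invariant set on which the noise vanishes can be shown to be trapped by it, upgrading $\liminf V=0$ to $\lim_t V(\rho_t)=0$ a.s. Continuity of the paths between switches and the equivalence of $V$ with $\norm{\rho_t-P_S\rho_t P_S}_1$ then deliver the a.s.\ claim, and bounded convergence the mean claim.

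The hard part will be the behaviour near the target, which is precisely where the strength of Assumption \ref{assumption} (invariance of $\mathcal{H_S}$ for \emph{each} generator) over the weaker Assumption \ref{assumption_2} is needed. The instantaneous rate $\min_j V(\mathcal{L}_j(\rho))$ degenerates as $\rho\to\mathcal{I_S}(\mathcal{H})$, and the $\mathcal{H_S}$–complement coherences enter the drift at order $\sqrt{V}$, so a naive Taylor estimate cannot certify $D(\rho)>0$ uniformly up to the target with a single dwell time — this is exactly why Assumption \ref{assumption_2} alone buys only practical, mean-only stability. The delicate step is therefore the local absorption: one must use that $\mathcal{I_S}(\mathcal{H})$ is invariant for every $\mathcal{L}_j$ and that $C=K$ damps precisely the complement coherences (so the noise coefficient is $O(V)$ and vanishes on the target) to conclude that the target traps the recurrent process. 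This is the point at which the analysis departs from the linear-systems techniques of \cite{Scaramuzza-Ticozzi} and instead builds, as in \cite{Ticozzi2017}, on the martingale structure of the filtering dynamics.
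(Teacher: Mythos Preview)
Your overall route differs from the paper's: you work with a fixed dwell time, a discrete-time supermartingale at the switching instants, and an absorption step to pass from $\liminf=0$ to $\lim=0$, recovering the mean statement at the end by bounded convergence. The paper instead first builds a state-independent lower bound on the dwell time that keeps the drift of $V$ strictly negative on each interval, proves mean convergence by a Barbalat-type argument, and then observes that $V(\rho_t)$ is a bounded \emph{continuous-time} supermartingale for the SME, so its a.s.\ limit coincides with the already-known $L^1$-limit $0$.

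The genuine gap in your argument is the supermartingale claim for $(V(\rho_{t_k}))_k$, which needs $D(\rho)\ge 0$ for \emph{all} $\rho$ at a \emph{single} $\tau$. Your compactness step only produces, for each $\delta>0$, a $\tau_\delta$ with $D\ge c_\delta$ on $\{V\ge\delta\}$; you even acknowledge later that a naive estimate ``cannot certify $D(\rho)>0$ uniformly up to the target with a single dwell time.'' But without $D\ge 0$ everywhere the telescoping bound $\sum_k\mathbb{E}[D(\rho_{t_k})]\le V(\rho_0)$ and the conclusion $D(\rho_{t_k})\to 0$ a.s.\ both fail, so the chain to $\liminf V=0$ collapses. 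The paper closes exactly this gap using the invariance part of Assumption~\ref{assumption}: restricting to the $R$-block, both $|\dot V(\rho)|$ and $|\ddot V(\rho)|$ are bounded above and below by constants times $\tr(\rho_R)$ (eigenvalue bounds on $\mathcal{L}_{c,R}^\dag(K_R)$ and $(\mathcal{L}_{j,R}^\dag)^2(K_R)$), so their ratio is uniformly bounded and a state-independent dwell time exists on which $\dot V$ cannot change sign. That quantitative input is what your compactness argument lacks, and it is precisely where Assumption~\ref{assumption} is stronger than Assumption~\ref{assumption_2} in the non-chattering construction---not only in the endgame. Finally, your absorption step (``a recurrent trajectory near such an invariant set on which the noise vanishes can be shown to be trapped by it'') is asserted rather than proved; the paper avoids this altogether by making $V(\rho_t)$ a continuous-time supermartingale and reading the a.s.\ limit off the $L^1$ one.
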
 
    
    \begin{proof}
        {\blue
        We start by proving the existence of a non-chattering time sequence, then we prove stability in mean, and finally, we argue that a.s. stability follows  from stability in mean and Theorem \ref{thm:mean_as_equ}.
        If Assumption \ref{assumption} holds, then by Proposition \ref{prop:1implies2} we can construct a linear Lyapunov $V(\rho)=\tr(K\rho)$ function as in Assumption \ref{assumption_2}.
        Assume $\dot{V}(\rho_{t_k}) := V(\mathcal{L}_j(\rho_{t_k}))<0$ and define $\hat{t} = \min\{t \text{ s.t. }V(\rho_t) \leq r V(\rho_{t_{k}})\}$ for some $r\in[0,1)$ fixed. We want to prove that $\exists\varepsilon>0$ such that $\hat{t}-t_k>\varepsilon$, and that it does not depend on the state at time $t_k$.
        
        We can then apply the mean value theorem: there exists $s\in[t_k,\hat{t}]$ such that $\ddot{V}(\rho_s)(\hat t-t_k) = \dot{V}(\rho_{\hat t}) - \dot{V}(\rho_{t_k}) = (1-r)|\dot{V}(\rho_{t_k})|$.
         Since by Assumption \ref{assumption} all evolutions leave $\mathcal{I_S}(\mathcal{H})$ invariant we have that:
        \begin{align}
            | \dot V(\rho_{t_k}) | & = |\tr( \lind_{j}^\dag(K) \rho_{t_k})| \geq |\tr( \lind_C^\dag(K) \rho_{t_k})|\nonumber \\ 
            &= |\tr( \lind_{C,R}^\dag(K_R) \rho_{R,t_k})|\geq k_{C,min} \tr(\rho_{R,t_k}), \label{eq:bound}
         \end{align}
        where the $R$ subscript denotes the restriction of the operators to the complement of the target space and $k_{C,min}$ is the minimum absolute value of the eigenvalues of $\lind_{C,R}^\dag(K_R),$ and in the second line we used  Proposition 2.5 of \cite{Ticozzi2017}. 
        On the left, we then have
        $| \ddot V(\rho_s) | = | \tr( K \lind^2_j(\rho_S)|= |\tr( K_R \lind^2_{R,j}(\rho_{R,s})| \leq \bar k_{2,max} \tr(\rho_{R,s}),$
        where $\bar k_{2,max}$ is the maximum absolute value of the eigenvalues of $(\lind_{R,j}^{\dag})^2(K_R),$ for all $j$. 
        Then, since $e^{(t-t_j)\lind_{R,j}}$ is a trace-non-increasing map for any finite interval and $\rho_{R,t}=e^{(t-t_k)\lind_{R,j}}\rho_{R,t_k}$, we have $\tr(\rho_{R,t_k})\geq\tr(\rho_{R,t})$, $\forall t \in[t_k,\hat{t}]$.
        Combining all the above inequalities we obtain:
        $ \hat t-t_k\geq {(1-r)k_{C,min}}/{\bar k_{2,max}} $
        then taking $t_{k+1}\in[t_k,\hat{t}]$ we obtain a non-chattering time sequence such that $\dot{V}(\rho_t)<0$ $\forall t \in[t_k,t_{k+1}]$.}
        We will now prove stability in mean. We shall use  a generalized version of Barbalat's Lemma and in particular Corollary 1 in \cite{Huang2012} to prove stability. 
        We have that $V(\rho)$ is lower bounded since $V(\rho)\geq0.$ By construction, $\dot{V}(\rho) =V(\mathcal{L}(\rho))$ is piece-wise continuous, and
     non-positive in the time interval for a non-chattering sequence constructed as above. Moreover, being $V(\mathcal{L}(\rho))$ linear in $\rho$ for any time interval $t\in[t_i,t_{i+1})$, we have that $V(\rho)$ is twice differentiable in any time interval. Computing the second derivative of $V(\cdot)$ with respect to time, we get $\ddot{V}(\rho)=\tr(\mathcal{L}_j^\dag(K)\mathcal{L}_j(\rho))$ which is linear in $\rho$, hence bounded for all $t\in[t_i,t_{i+1})$. Thus all hypothesis of Corollary 1 in \cite{Huang2012} are satisfied, and $\dot{V}(\rho)\to0$ in mean for $t\to\infty$.
     {\blue This proves stability in mean, and that $V(\rho_t)$ is a (continuous) positive supermartingale for SME. Convergence in mean implies $L^1$ convergence to 0; on the other hand, by bounded supermartingale convergence theorem $V(\rho)$ converges both in $L^1$ and a.s. to some $V_\infty$ (see also proof of Theorem 1.1 in \cite{Ticozzi2017}.) Since we know that the $L^1$-limit is $0,$ convergence a.s. is also guaranteed. Being $V$ positive on the complement of the target, $V(\rho)=0$ a.s. implies $\rho\in{\cal I}_S({\cal H})$ a.s. as well.}
    \end{proof}
    
    {\blue 
    {\em Remark:} For $\Delta t=t_{k+1}-t_k\rightarrow 0$ the proposed strategy tends to select {\em each time} $\lind_{j(t)}$ such that
    $j(t)=\textrm{argmin}_jV(\lind_j\rho_t)$. While granting the {\em optimal} convergence rate at each time, at least if the estimated state is correct, this continuous strategy is not practically viable, and could lead to chattering. Assumption \ref{assumption}, however, allows us to derive non-chattering switching sequences (and a worst-case exponential bound using the dominant eigenvalue of $\lind_{C,R}$). Hence it is key to assess how well the proposed strategy fares with respect to the optimal one with faulty initializations. In the next sections we shall focus on this, leaving the estimate of the optimal convergence rate for future work.
    We next show that the weaker Assumption \ref{assumption_2} is still sufficient to prove {\em practical} stability in mean.
    \begin{theorem}
    Assume Assumption 2 to hold for $V(\rho)=\tr(K\rho)$. Then for every $\varepsilon>0$ there exists a non-chattering sequence such that the measurement-based strategy stabilizes the $\varepsilon$-``neighborhood" of the target ${\cal I}_{S,\varepsilon}=\{\ \rho\in  \mathfrak{D}(\mathcal{H})|\min_{\sigma\in{\cal I}_S} \|\rho - \sigma\|_1<\varepsilon\}$ in mean, and enters it in finite time. 
    \end{theorem}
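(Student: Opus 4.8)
The plan is to re-run the Lyapunov argument of Theorem~\ref{thm:main}, but to replace the global exponential bound — which rested on each $\mathcal{L}_j$ leaving $\mathcal{I}_S(\mathcal{H})$ invariant — by a \emph{uniform} negative-drift estimate that holds only away from the target and is obtained through compactness. First I would convert the trace-norm neighborhood into a sublevel set of $V$. Since $V(\rho)=\tr(K\rho)$ is continuous, nonnegative, and vanishes exactly on $\mathcal{I}_S(\mathcal{H})$, compactness of $\mathfrak{D}(\mathcal{H})$ yields, for every $\varepsilon>0$, a threshold $\eta>0$ with $\{\rho:V(\rho)\leq\eta\}\subseteq\mathcal{I}_{S,\varepsilon}$: otherwise a sequence with $V\to0$ but distance $\geq\varepsilon$ would have a limit point in $\mathcal{I}_S(\mathcal{H})$ at distance $\geq\varepsilon$, a contradiction. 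It then suffices to drive, and keep, $V$ below a level comparable to $\eta$.

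Next I would establish the uniform drift bound. The function $\phi(\rho):=\min_j V(\mathcal{L}_j(\rho))$ is continuous (a minimum of finitely many linear maps) and, by Assumption~\ref{assumption_2}, satisfies $\phi(\rho)<0$ for $\rho\notin\mathcal{I}_S(\mathcal{H})$ and $\phi(\rho)\leq0$ everywhere. Restricting to the compact set $\{V\geq\eta\}$, which is disjoint from $\mathcal{I}_S(\mathcal{H})$, gives a uniform bound $\phi(\rho)\leq-\delta<0$ there. Exactly as in Theorem~\ref{thm:main}, $\ddot V(\rho)=\tr(\mathcal{L}_j^\dagger(K)\mathcal{L}_j(\rho))$ is linear in $\rho$, hence $|\ddot V|\leq M$ uniformly over $\rho$ and $j$. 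With the control choosing $j(t_k)=\arg\min_j V(\mathcal{L}_j(\rho_{t_k}))$, a first-order estimate along the chosen (fixed) generator gives $\dot V(\rho_t)\leq\phi(\rho_{t_k})+M(t-t_k)$ on $[t_k,t_{k+1})$. Taking the dwell time $\tau\leq\delta/(2M)$ forces $\dot V\leq-\delta/2$ throughout any interval starting in $\{V\geq\eta\}$, so $V$ drops by at least $(\delta/2)\tau$ there. Since $V$ is bounded on $\mathfrak{D}(\mathcal{H})$, only finitely many such intervals can occur before $V$ enters $\{V\leq\eta\}$, giving finite-time entrance.

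To keep the process inside the neighborhood I would bound the overshoot: if an interval starts with $V(\rho_{t_k})\leq\eta$, then $\phi(\rho_{t_k})\leq0$ still holds and the second-order estimate gives $V(\rho_t)\leq V(\rho_{t_k})+\tfrac{M}{2}\tau^2$, so shrinking $\tau$ (coupling it to $\varepsilon$ through $\eta$) confines the trajectory to a slightly larger sublevel set still contained in $\mathcal{I}_{S,\varepsilon}$. Combining the two regimes, once $V\leq\eta$ the process cannot escape the chosen neighborhood. To make this a statement \emph{in mean}, I would use linearity $V(\hat\rho_t)=\mathbb{E}[V(\rho_t)]$ together with the tower property: conditioning on $\mathcal{F}_{t_k}$ the stochastic term $\tr(K\mathcal{G}_C(\rho_t))\,dW_t$ has zero mean and the conditional expectation evolves under the single generator $\mathcal{L}_{j(t_k)}$, so the per-interval drift and overshoot estimates pass to $\mathbb{E}[V(\rho_t)]$; a continuous comparison bound $\phi(\rho)\leq-\beta(V(\rho))$ with $\beta>0$ off the target then propagates the strict decrease to $\mathbb{E}[V(\rho_t)]$ while it stays above the target level.

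The main obstacle is precisely the loss of invariance distinguishing Assumption~\ref{assumption_2} from Assumption~\ref{assumption}. Without it, $V$ is no longer monotone along a dwell interval — it can rebound after its initial decrease — and, crucially, the uniform decrease rate $\delta$ degrades to $0$ as $\eta\to0$, whereas the curvature bound $M$ does not. Hence one cannot push $V$ to $0$ with a dwell time bounded below; the dwell time must be tied to $\varepsilon$ and the overshoot $\tfrac{M}{2}\tau^2$ balanced against $\eta$, which is exactly what confines the conclusion to \emph{practical} stability rather than true GAS. A secondary technical point is that $j(t_k)$ depends on the random trajectory $\rho_{t_k}$, so $\hat\rho_t$ does not solve a single MME; the conditioning argument above is what lets the deterministic per-interval estimates survive the averaging.
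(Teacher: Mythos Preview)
Your proposal is correct and follows essentially the same route as the paper's proof: reduce the trace-norm neighborhood to a sublevel set of $V$, extract a uniform negative drift $-\delta$ on the compact complement of that sublevel set, bound $\ddot V$ uniformly, pick a dwell time small enough that $\dot V$ stays negative across each interval starting outside, deduce finite-time entry, and then control the overshoot to obtain forward invariance of a slightly larger sublevel set. The only cosmetic differences are that the paper converts $\mathcal{I}_{S,\varepsilon}$ to a sublevel set via the explicit inequality $\min_{\sigma\in\mathcal{I}_S}\|\rho-\sigma\|_1\leq 2\,\tr(\rho_R)$ and $V(\rho)\leq k_{\max}\tr(\rho_R)$ (rather than your sequential compactness argument), works with two nested levels $\alpha<\beta$ from the start, and bounds the overshoot by the first-order estimate $M_1\Delta t$ using $|\dot V|\leq M_1$, whereas you exploit $\phi(\rho_{t_k})\leq 0$ to get the sharper second-order bound $\tfrac{M}{2}\tau^2$.
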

    \begin{proof}
    First notice that
    $\min_{\sigma\in{\cal I}_S} \|\rho - \sigma\|_1=\min_{\sigma\in{\cal I}_S} \tr(|\Pi_S(\rho - \sigma)|)+\tr(\rho_R)\leq 2\tr(\rho_R),$ with $\Pi_S$ the projector on the support of the target.
    So  $\tr(\rho_R)\leq \varepsilon/2$ implies $\rho\in{\cal I}_{S,\varepsilon},$ and, since $V(\rho)<k_{max}\tr(\rho_R),$ we have $\Omega_{\varepsilon/2k_{max}}\subseteq{\cal I}_{S,\varepsilon},$ where we define the sub-level set  $\Omega_\ell=\{\rho|V(\rho)\leq \ell\}.$ So it is sufficient to prove that $\Omega_\beta=\Omega_{\varepsilon/2k_{max}}$ is stabilized by the average dynamics.
    Consider $\alpha<\beta=\varepsilon/2k_{max}.$ 
    By Assumption 2,  and since $ \mathfrak{D}(\mathcal{H})$ is compact and $\dot V$ and $\ddot V$ are linear in $\rho$, there exists $\delta>0$ such that $\min_j V(\lind_j\rho)<-\delta$ for all $\rho\notin \Omega_\alpha,$ as well as two constants such that  $\max_{\rho,j}|V(\lind_j\rho)|<M_1$ and $\max_{\rho,j}| V(\lind_j^2\rho)|<M_2.$ Then, if we take $t_{k+1}-t_k=\Delta t< \delta/M_2$ we are guaranteed that $\dot V$ remains negative for the whole switching interval. Since $\dot V(\rho)<-\delta+M_2\Delta t<0,\;\forall\rho\notin \Omega_\alpha,$ we have  $V(t)\leq V(t_k)+(-\delta+M_2\Delta t)(t-t_k),$ and $\Omega_\alpha$ is reached in finite time during some interval $[t_\ell,t_{\ell+1})$, and so is $\Omega_\beta,$ and remain in there until at least $t_{\ell+1}.$  We next prove that if we start from $\Omega_\alpha$ we do not exit $\Omega_\beta,$ making it invariant. To this aim, it is sufficient to note that if $\rho_{t_{\ell+1}}\in\Omega_\alpha$, then for $t>t_{\ell+1}$:
    $V(\rho_t)\leq \alpha + M_1\Delta t$. The right-hand side can be made arbitrarily small by reducing $\alpha$ and $\Delta t$, so it is always possible to ensure $\alpha + M_1\Delta t\leq \beta.$
    \end{proof}
    {\em Remark:} In this case, the lack of invariance of the target prevents from guaranteeing non-zero dwell times close to it; as a consequence, stability in mean is proved towards a set that does not have limited support, hence convergence a.s. cannot be established as before.}
    
\subsection{On robustness with respect to initialization errors}
\label{sec:initialization}

The knowledge of the initial condition plays a central role in determining the effectiveness of the control law, as already highlighted above.
 Let suppose that the system has a true initial state $\rho_0$ but our best estimate on the initial state is $\rho_{0,e}$. 
If we aim to stabilize a pure state, then by a simple majorization argument (see \cite{Scaramuzza-Ticozzi}) it is easy to show that convergence for the state-based strategy is guaranteed for any initial condition such that $supp(\rho_0)\subseteq supp(\rho_{0,e}),$ even if the switching generators are selected using the projected evolution of the wrong state. However, it is not straightforward to estimate how fast convergence is attained, or what would be the worst-case scenario.

When considering the SME, in the case of a non-exact estimate of the initial condition, the evolution of the true state is still described by model \eqref{eqn:model}. Then, by using the fact that the output signal $Y_t$ is independent of how we model the state, we obtain the evolution of the estimated state:
\begin{align}
    d\rho_{t,e} &= \mathcal{L}_j(\rho_{t,e})dt + \mathcal{G}_C(\rho_{t,e})d\widetilde{W}_t,\\
    d\widetilde{W}_t &=dW_t + \tr[(C+C^\dag)(\rho_t - \rho_{t,e})]dt.
    \label{eqn:filter_est}
\end{align}
It is important to highlight that, while $dW_t$ is a Wiener process, $d\widetilde{W}_t$ is not, as it presents a drift term. This complicates the study of the stability and convergence of the filter: some useful results are provided in \cite{VanHandel2009, Amini2011}.
{\blue In particular, for homodyne-detection SMEs as in our case, it is known that the filter is stable, and the measurement outcomes estimate converge, namely $\tr[(C+C^\dag)(\rho_t - \rho_{t,e})]\rightarrow 0$ in mean. For extremal eigenvalues of $C$, as in our case, this directly implies that {\em attaining stabilization for the estimate also implies stabilization of the actual state, and vice-versa.} If the measurement effect is small with respect to the drift (e.g. we consider $\lambda C$ for a positive, sufficiently small $\lambda$) then the SME behavior is well approximated by the MME, and robustness is inherited whenever $supp(\rho_0)\subseteq supp(\rho_{0,e})$.
General convergence of the density operator is harder to obtain, and may not always be granted. When Assumption \ref{assumption} holds, the target is invariant and $\tr(\hat\rho_{R,t})=\tr(P_R\hat\rho_t )$ is decreasing, for both the actual and the estimated state. This implies that the SME dynamics are stable and the two evolutions do not diverge in mean. 

It goes beyond the scope of this work to analyze what are the conditions under which the switching law stabilizes both. we tested initialization robustness numerically finding good results in all simulations. What is affected by faulty initialization, in general, is the convergence speed and we will exhibit some evidence in the next section.
Being the measurement-based control law a true {\em closed-loop} solution, we expect it to have a faster convergence than its open-loop counterpart, with the state estimate getting more accurate using the measurements and converging to the true state faster. }

\section{Application to multipartite entanglement generation: graph states}
\label{sec:case_study}
	\begin{figure*}[h]\centering
    	\begin{subfigure}{.49\textwidth}\centering
    		\includegraphics[scale=\myscale]{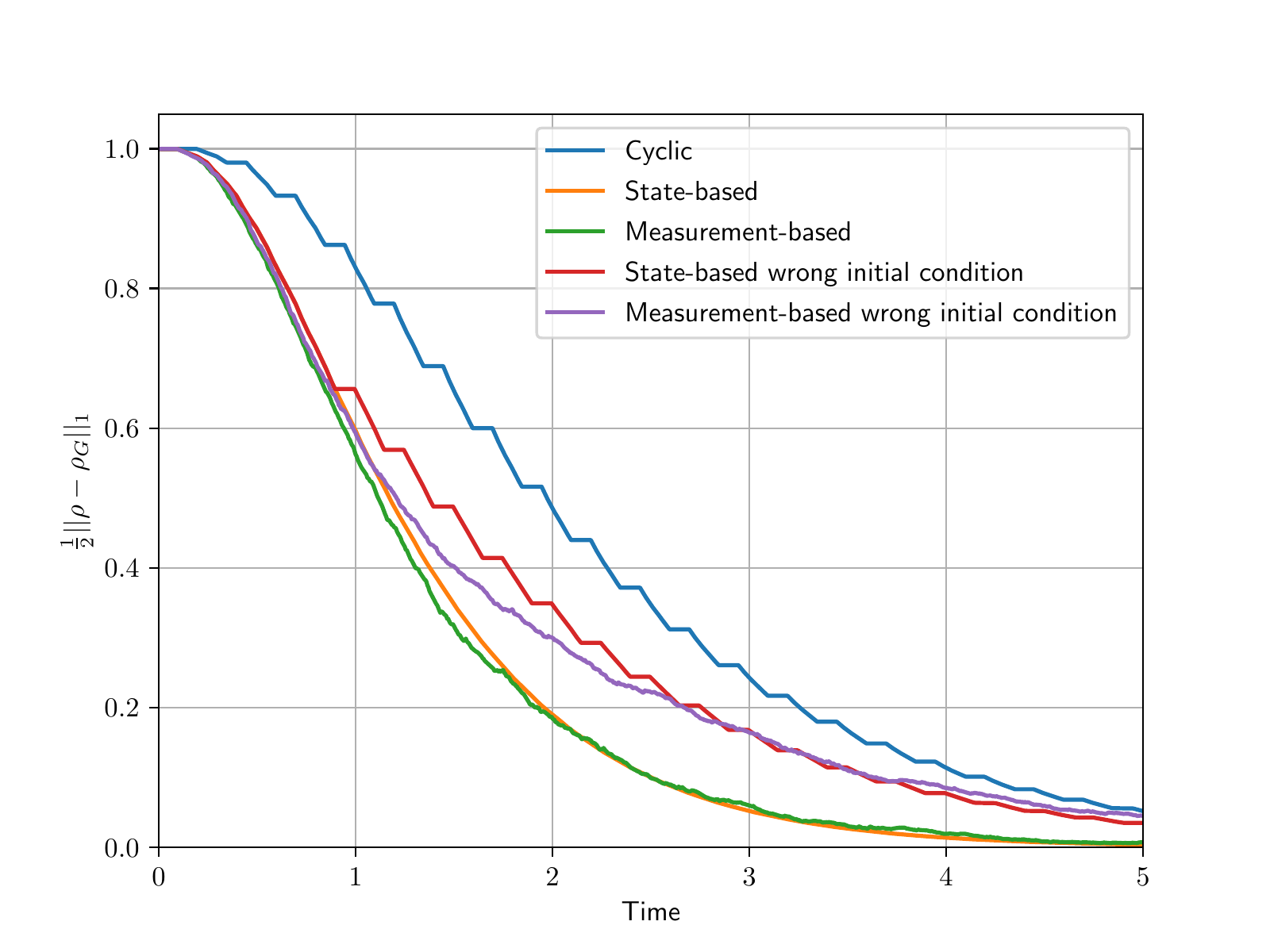}
    		\caption{Results of simulation 1.}
    		\label{fig:results_1}
    	\end{subfigure}
    	\begin{subfigure}{.49\textwidth}\centering
    		\includegraphics[scale=\myscale]{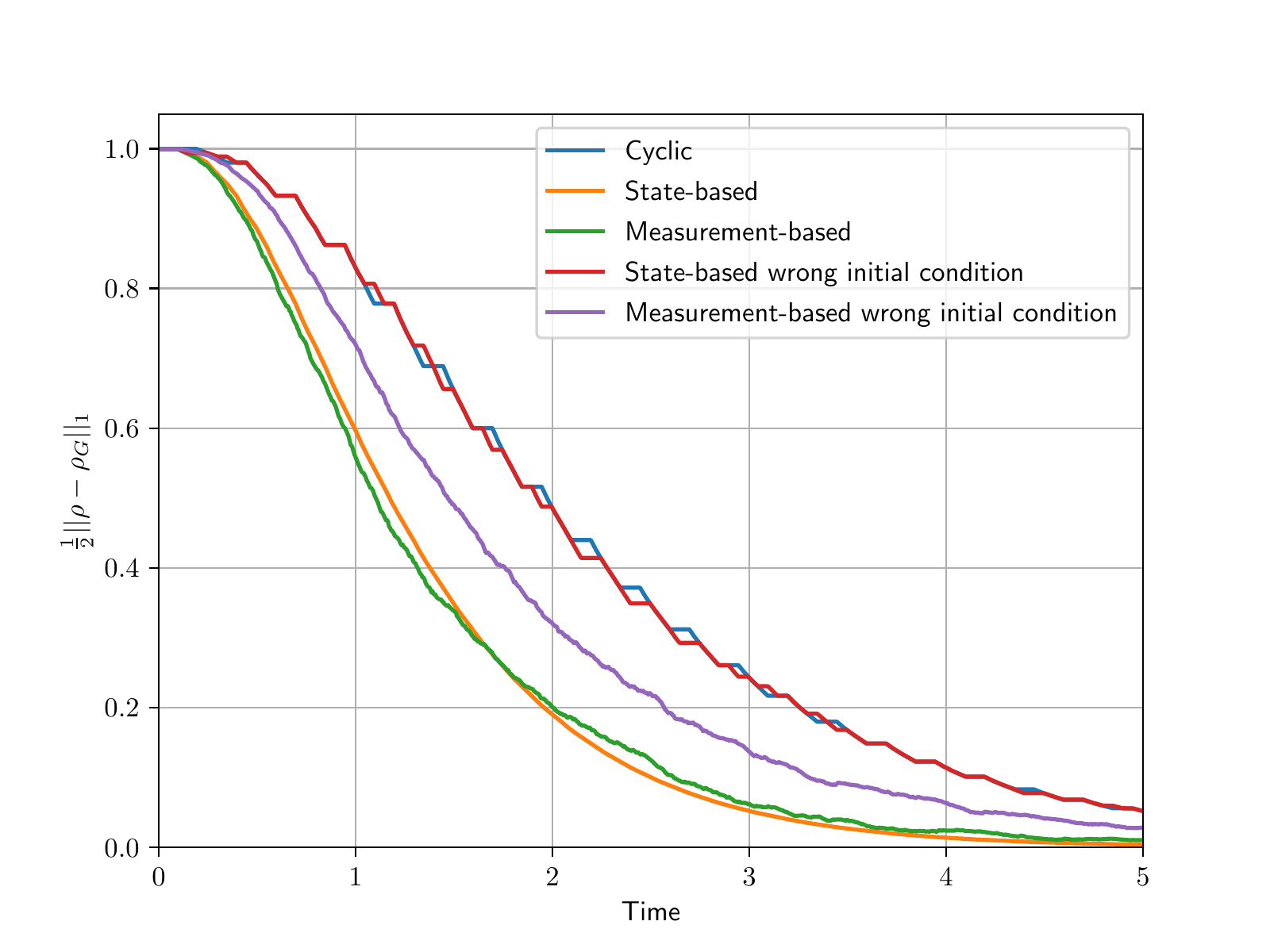}
    		\caption{Results of simulation 2.}
    		\label{fig:results_2}
    	\end{subfigure}
    	\caption{Results of the two simulations scenarios described in paragraphs \ref{par:sim_1} and \ref{par:sim_2}. Only the trajectories of the true state of the system are here reported (not the estimated one).} 
    	\label{fig:results}
    \end{figure*}
    In this section, we test the performance of the proposed control law in stabilizing an entangled state of a network of qubits (two-dimensional systems). We focus on the case of graph states: these are of practical interest \cite{graph} and offer a highly symmetric structure on which we can rely upon to design our control, which we briefly recall in the following.
	Let us consider a graph $\mathcal{G} = (\mathcal{V},\mathcal{E})$ with $|\mathcal{V}|=n$. Each node of the graph $j\in\mathcal{V}$ is associated to a qubit:
	$
	\mathcal{H} = \bigotimes_{a=1}^n\mathcal{H}_a, \quad \textrm{dim}(\mathcal{H}_a)=2\quad\forall a.
	$
	Each edge $(i,j)\in\mathcal{E}$ describes the interaction between two adjacent qubits, this interaction is described by the operator $U_{(j,k)}$. We will assume that $U_{(j,k)}$ acts as the generalized controlled-Z gate acting on the qubits $j$ and $k$. Thus we have $U_{(j,k)}=C^Z_{(j,k)}\otimes I_{\overline{(j,k)}}$ with $C^Z = \text{diag}(1,1,1,-1)$.
	The product of all such unitary matrices, $	U_G = \prod_{(j,k)\in\mathcal{E}}U_{(j,k)}$
	can be seen as a global unitary, or in quantum computation terms a quantum circuit, which is used to map a factorized states with respect to the original qubit subsystems into  entangled states, called {\em graph states}. A more comprehensive description of these states and their open-loop stabilization can be found in \cite{Ticozzi2012,Johnson2016}.
	
	We consider a particular pure state as our target state but the procedure that follows can be adapted to any pure target state. Assume that we want to prepare the state $\rho_G = U_G \ket{+}\bra{+}^{\otimes n} U_G^\dag$.
	To construct (locality constrained) MME generators that stabilize $\rho_G$ we simply construct the local generators which prepare each factor into $\rho_j=\ket{+}\bra{+}$ in the un-rotated basis, extend each generator to the whole system by tensor product with the identity and then transform these operators by $U_G$. The operators needed for local stabilization are thus of the form 
	$L_{G,j} = U_G^\dagger (\ket{+}\bra{-}_j\otimes I_{\bar{j}})U_G.$
	We can then create a  $\mathcal{D}_{L_j}(\rho)$ for each qubit, using these noise operators.
	In \cite{Johnson2016} it is possible to find the proof of the fact that 
	$\mathcal{D}_c(\rho) = \sum_{j=1}^n\frac{1}{n}\mathcal{D}_{L_j}(\rho)$
	makes $\rho_G$ GAS for the model $\frac{d}{dt}\rho_t = \mathcal{D}_c(\rho_t)$. 
	In order to  find a Lyapunov function suitable to our aims we introduce the graph Hamiltonian:
    $ H_G =-U_G^\dagger\Big(\sum_jX_j\Big)U_G$
	where $X_j=\sigma_{x,j}\otimes I_{\bar{j}}$ and $\sigma_{x,j}$ represents the Pauli-$x$ matrix acting  on the $j$-th qubit. 
	The target state $\rho_G$ may be seen as the unique ground state of the graph Hamiltonian $H_G$, i.e. $\rho_G$ is the only state such that $\tr(H_G\rho_G)=\min\lambda_i$, where $\lambda_i$ are the eigenvalues of $H_G$. 
	This implies that $V(\rho) = \tr(H_G\rho)$ is a valid linear Lyapunov function. 
    We can thus consider models \eqref{eqn:model} and \eqref{eqn:MME_model} with $\mathcal{L}_j(\rho) = \mathcal{D}_{L_j}(\rho)+\mathcal{D}_{H_G}(\rho)$ and $\mathcal{G}_{H_G}(\rho)$ where $L_j$ and $H_G$ are as defined above. The results of \cite{Johnson2016} imply that any convex combinations $\sum_j\alpha_j\mathcal{L}_j(\rho)$ makes GAS in mean, and thus satisfies Assumption \ref{assumption}.

\subsection{Numerical simulations}
    
    We will now present two simulations realized on a 5-qubit system that show the advantages of the proposed method. 
    
    Both simulations will have the same graph (depicted in Figure \ref{fig:sys_graph}), the same target state $\rho_G = U_G \ket{+}\bra{+}^{\otimes n}U_G^\dag$, and the same true initial condition $\rho_0 = U_G \ket{--++-}\bra{--++-}U_G^\dag$. The only difference between the two simulations will be the estimated initial condition $\rho_{0,e}$.
    
    \begin{figure}[h!]
		\centering
		\begin{tikzpicture}[scale = 0.4,line cap=round, line join=round]
		\draw[very thick,color=blue!40] (0,0) -- (4,0);
		\draw[very thick,color=blue!40] (-6,0) -- (0,0);
		\draw[very thick,color=blue!40] (4,0) -- (8,0);
		\draw[very thick,color=blue!40] (0,0) -- (-3,2);
		\draw[very thick,color=blue!40] (-6,0) -- (-3,2);
		
		\pgfmathsetmacro{\r}{1}
		\pgfmathsetmacro{\xc}{0}
		\pgfmathsetmacro{\yc}{0}
		\draw[color=orange,thick,fill=orange!30] (\xc,\yc) circle  (\r cm);
		\draw [dash pattern=on 3pt off 3pt,color=orange,thick] (\xc,\yc) ellipse (\r cm and 0.45*\r cm);
		\draw [->] (\xc,\yc)-- (\xc+\r/2*0.70,\yc+\r/2*1.07);
		\draw (\xc-1,\yc-1) node[anchor=north west] {$\ket{\psi_3}$};
		
		\pgfmathsetmacro{\xc}{4}
		\pgfmathsetmacro{\yc}{0}
		\draw[color=orange,thick,fill=orange!30] (\xc,\yc) circle  (\r cm);
		\draw [dash pattern=on 3pt off 3pt,color=orange,thick] (\xc,\yc) ellipse (\r cm and 0.45*\r cm);
		\draw [->] (\xc,\yc)-- (\xc+\r/2*0.70,\yc+\r/2*1.07);
		\draw (\xc-1,\yc-1) node[anchor=north west] {$\ket{\psi_4}$};
		
		\pgfmathsetmacro{\xc}{8}
		\pgfmathsetmacro{\yc}{0}
		\draw[color=orange,thick,fill=orange!30] (\xc,\yc) circle  (\r cm);
		\draw [dash pattern=on 3pt off 3pt,color=orange,thick] (\xc,\yc) ellipse (\r cm and 0.45*\r cm);
		\draw [->] (\xc,\yc)-- (\xc+\r/2*0.70,\yc+\r/2*1.07);
		\draw (\xc-1,\yc-1) node[anchor=north west] {$\ket{\psi_5}$};
		
		\pgfmathsetmacro{\xc}{-3}
		\pgfmathsetmacro{\yc}{2}
		\draw[color=orange,thick,fill=orange!30] (\xc,\yc) circle  (\r cm);
		\draw [dash pattern=on 3pt off 3pt,color=orange,thick] (\xc,\yc) ellipse (\r cm and 0.45*\r cm);
		\draw [->] (\xc,\yc)-- (\xc+\r/2*0.70,\yc+\r/2*1.07);
		\draw (\xc+1,\yc+1) node[anchor=north west] {$\ket{\psi_1}$};
		
		\pgfmathsetmacro{\xc}{-6}
		\pgfmathsetmacro{\yc}{0}
		\draw[color=orange,thick,fill=orange!30] (\xc,\yc) circle  (\r cm);
		\draw [dash pattern=on 3pt off 3pt,color=orange,thick] (\xc,\yc) ellipse (\r cm and 0.45*\r cm);
		\draw [->] (\xc,\yc)-- (\xc+\r/2*0.70,\yc+\r/2*1.07);
		\draw (\xc-1,\yc-1) node[anchor=north west] {$\ket{\psi_2}$};
		\end{tikzpicture}
		\caption{Graph configuration used in the simulations.}
		\label{fig:sys_graph}
	\end{figure}
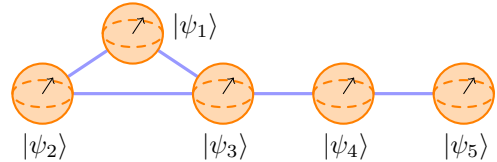

    In order to numerically compute the solution of the SME, we used the method proposed in \cite{Amini2011}, which we found the most reliable. Regarding the simulations of the cyclic and the state-based control laws, we simply used the Euler integration method. 
    
    The two simulations have been run with 
    step length $d_t=0.005,$ number of steps $N=1000,$ steps between switching $\epsilon=10,$ each with $1000$ realizations.
    The results of the two simulations will be shown as graphs of the trace norm distance of the true state from the target state $\frac{1}{2}\norm{\rho_t-\rho_G}_1$ against time. 
    
    \subsubsection{Simulation 1} 
    \label{par:sim_1}
    The first estimated initial state we consider is $\rho_{0,e} = 0.5[\rho_0^t+I_d/d]$. This introduces an additional  ``uniform'' uncertainty on the correct initialization of the filter. From Figure \ref{fig:results_1} we can observe that the measurement-based trajectory converges on average with a speed that is similar to the optimal one when the true initial state is available to both strategies. Both do improve convergence with respect to cyclic switching, which does not depend of course on the initial condition estimate.  
    For the wrong initial condition case, the measurement-based trajectory improves the convergence with respect to the state-based one, yet mostly in the central phase of the evolution. The measurement-based trajectories shown in Figure \ref{fig:results} are the average obtained from 1000 realizations. The true trajectories would resemble the dashed ones in Figure \ref{fig:case_detail}.  These results already highlight how the performance depends on the estimated state. This is further illustrated by the next set of simulations.
    
    \begin{figure}[h]
        \centering
        \includegraphics[scale=\myscale]{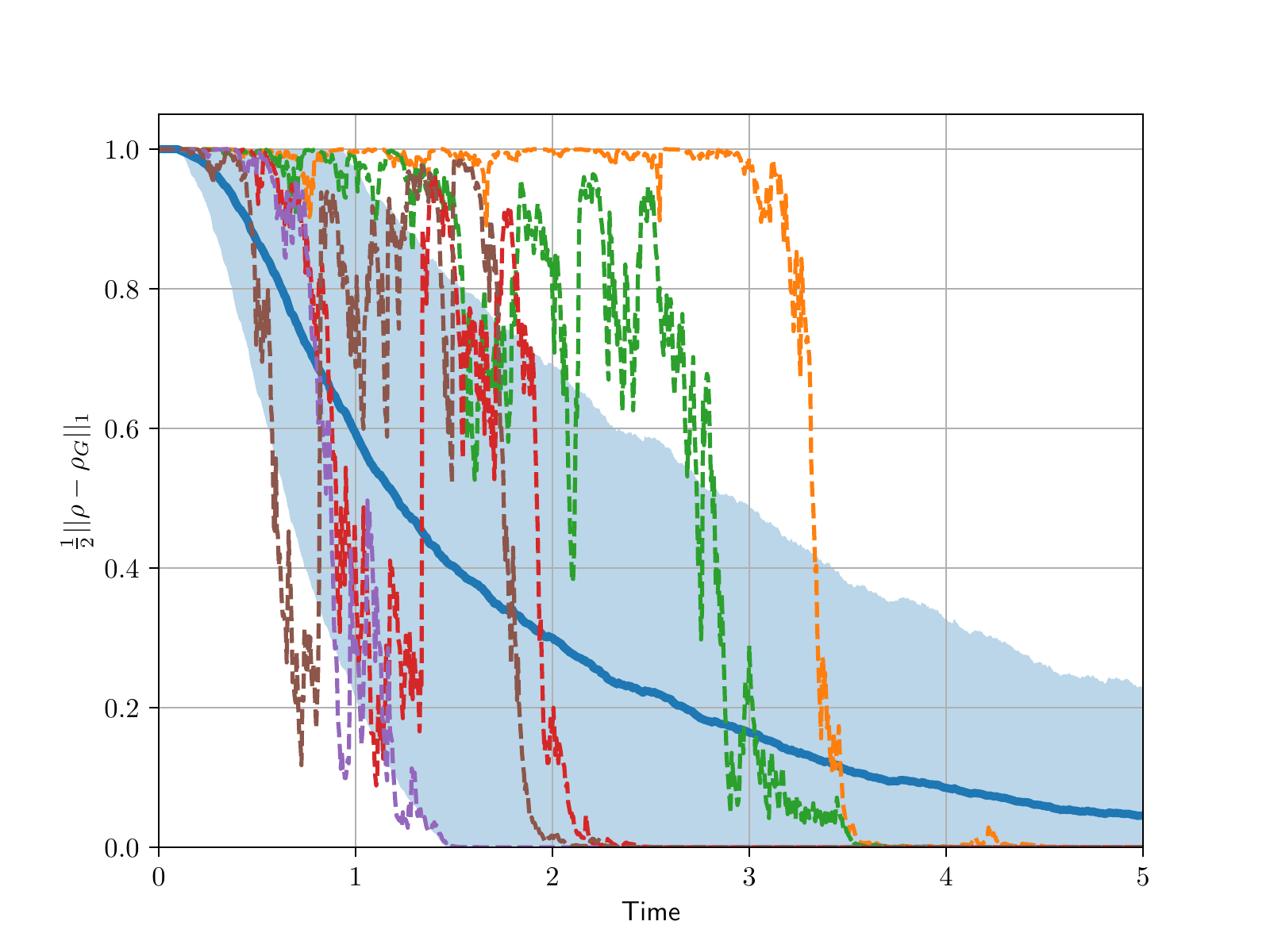}
        \caption{Evolution of the measurement-based trajectory in Simulation 1. In solid blue the average trajectory over 1000 realizations is shown, the light-blue area shows the average plus or minus one standard deviation while the dashed lines represent five typical realizations of the quantum trajectories.}
        \label{fig:case_detail}
    \end{figure}

    \subsubsection{Simulation 2}
    \label{par:sim_2}
    The second estimated initial state we consider is $\rho_{0,e} = 0.5[\rho_0^t +U_G \ket{++--+}\bra{++--+}U_G^\dag]$. This second case has been designed to highlight the strength of the proposed strategy. In fact, the state is a mixture of the true state and a state which is not only orthogonal to the true one but is obtained by flipping single-qubit states, so the marginal states are also orthogonal.  From the results shown in Figure \ref{fig:results_2} we can observe that this case is very different from the previous simulation. In particular, we can notice that correct initialization still provides the best convergence for both control laws. However, for the wrong initial condition case, the state-based trajectory has the same convergence rate as the cyclic trajectory, while the measurement based-trajectory provides a noticeable improvement, closer to the optimal performance. These results reinforce the intuition that closed-loop control, which can take advantage of the measurement outcome to obtain a better estimation of the state, can adapt in real-time and better avoid switching sequences that are not effective.


\printbibliography

\end{document}